\newcommand{\Rmnum}[1]{\expandafter\@slowromancap\romannumeral #1@}
\newtheorem{theorem}{Theorem}[section]
\newtheorem{lemma}[theorem]{Lemma}
\theoremstyle{definition}
\theoremstyle{remark}
\newtheorem{remark}{Remark}[section]
\def\ps@mystyles{    \def\@evenhead{\hfil{\sl\leftmark}\hfil}    \def\@oddhead{\hfil{\sl\rightmark}\hfil}    }
\begin{document}
\title{Distribution of a particle's position in the ASEP  with the {alternating} initial condition}

\author{Eunghyun Lee}

\maketitle

\begin{abstract}
  \noindent In this paper we give the distribution of the position of a particle in the asymmetric simple exclusion process (ASEP) with  the alternating initial condition. That is, we find $\mathbb{P}(X_m(t) \leq x)$ where $X_m(t)$ is the position of the particle at time $t$ which was at $m =2k-1, \hspace{0.1cm}k \in \mathbb{Z}$ at $t=0.$ As in the ASEP with step initial condition, there arises a new combinatorial identity for the alternating initial condition, and this identity relates the integrand of the integral formula for $\mathbb{P}(X_m(t) \leq x)$ to a determinantal form together with an extra product.

\end{abstract}

\section{Introduction}
\label{intro}
The exclusion process is an interacting stochastic particle system on a countable set $S$.  A particle at $x \in S$ chooses $y\in S$ with probability $p(x,y)$ after a holding time exponentially distributed with parameter 1. If  $y$ is empty, the particle at $x$ jumps to $y$ but if  $y$ is already occupied, then the particle remains at $x$, and the Poisson clock resumes. The detailed references on the construction of the model are Liggett's books \cite{Lig1,Lig2}. The asymmetric simple exclusion process (ASEP) is defined on $S=\mathbb{Z}$ by taking $p(x,x+1)=p$ and $p(x,x-1)=q$ for all $x \in \mathbb{Z}$, where $p+q=1$. If $p=1$, we call it the totally asymmetric simple exclusion process (TASEP). \\
\indent  Sch\"{u}tz \cite{Schut} considered the system of $N$ particles for the TASEP. There Sch\"{u}tz obtained the probability that the system  is in configuration $\{x_1,\cdots,x_N\}$ at time $t$  given the initial configuration $\{y_1,\cdots, y_N\} $ at $t=0$, and expressed the probability as an $N \times N$ determinant, and moreover, for general ASEP, obtained  the conditional  probability for $N=1,2$.
   Johansson \cite{Kurt} studied the TASEP with a special initial condition that one half of the system is occupied and the other half of the system is empty at $t=0$, which is called the \textit{step initial condition}.  Assuming the left half is occupied and the right half is empty at $t=0$, Johansson derived the probability that the $N$th particle from the rightmost moves at least $M$ steps before time $t$, which describes the \textit{time-integrated current}. By using the technique used in \cite{Schut} R\'{a}kos and Sch\"{u}tz \cite{Schut2} obtained the same result as Johansson \cite{Kurt} for the TASEP with  step initial condition.  R\'{a}kos and Sch\"{u}tz's method originates from the \textit{Bethe Ansatz} while Johansson used a combinatorial argument. \\
\indent   Tracy and Widom  extended the previous results on the TASEP to the  ASEP. Theorem 5.1 and Theorem 5.2 in \cite{TW1} provide  the probability that the $m$th particle from the leftmost is at $x \in \mathbb{Z}$ at time $t$ when the system initially has finitely many particles. Corollary to Theorem 5.2 is a generalization of the Johansson's result and the R\'{a}kos and Sch\"{u}tz's result to the ASEP and the subsequent \textit{Remark}  shows that how their result leads to the Johansson's result or the  R\'{a}kos and Sch\"{u}tz's result. The work for the deterministic step initial condition in \cite{TW1} was recently generalized to \textit{step Bernoulli initial condition} \cite{TW2} that assumes that at $t=0$ each site in $\mathbb{Z}^+$ is occupied with probability $\rho$ with $0 < \rho \leq 1$ independently of the others and all other sites are empty. \\
\indent In the ASEP with  step initial condition that the right half of the system is occupied and the left half is empty at $t=0$, the random variable $x_m(t)$, the $m$th particle's position from the leftmost at time $t$ can be related to the \textit{time-integrated current}. Assuming $p<q$,
\begin{equation*}
 \mathbb{P}(\mathcal{T}(x,t) \geq m) = \mathbb{P}(x_m(t) \leq x)
\end{equation*}
where $\mathcal{T}(x,t)$ is the number of particles whose positions are less than equal to $x$ at time $t$.  That the asymptotics on the fluctuation of $x_m(t)$ or $\mathcal{T}(x,t)$ are related to the GUE Tracy-Widom distribution in random matrix theory and that fluctuations are in the regime of the $t^{1/3}$ scale are well known for various situations \cite{Corwin,Kurt,Schut2,TW2}. \\
\indent Besides the step initial condition, the TASEP with the alternating initial condition also has been investigated.  This initial condition assumes that all even sites are occupied and all odd sites are empty at $t=0$ or vice-versa. While the current fluctuation of the TASEP (more generally ASEP) with step initial condition in the long limit is governed by the GUE Tracy-Widom distribution \cite{Corwin,Kurt,Schut2,TW5,TW2}, on the other hand, the current fluctuation of the TASEP with the alternating initial condition is related to the GOE Tracy-Widom distribution [3--6,10,15]. The appearance of the GOE Tracy-Widom distribution in growth models goes back to \cite{BaikRains} by Baik and Rains and a breakthrough in the TASEP was led by Sasamoto \cite{Sasamoto2}.  But to the best of the author's knowledge nothing is known on the ASEP with the alternating initial condition although we expect the GOE statistics in this case. \\
\indent In this paper, we study the distribution of a particle's position in the ASEP with the alternating initial condition, which gives information on the current of the system.  The asymptotic behavior of the current remains as a problem for the future. \\ \indent
We denote by $Y \subset \mathbb{Z}$ the set of initial positions of particles and by $\mathbb{P}_Y$ the probability of the ASEP with the initial condition $Y$.
 The main object we are interested in is a random variable $X_m(t)$, the position of a particle at time $t$ whose initial position is  $m\in Y$ and the main goal in this paper is to obtain $\mathbb{P}_Y(X_m(t) \leq x)$ when $Y=2\mathbb{Z}-1=\{2i-1 : i \in \mathbb{Z}\}.$   To do so we will start with a finite set $Y=\{2i-1: i \in \mathbb{Z},\hspace{0.2cm} -N+1 \leq i \leq N\}$ and will consider the limiting case that $N\rightarrow \infty$. Let $Y_+ = \{1,3,\cdots, 2N-1\}$ and $Y_- = \{-1,-3,\cdots, -2N+1\}$ so that $Y=Y_- \bigcup Y_+$. Let us denote the position of the $i$th particle from the leftmost at time $t$ by $x_i(t)$. Then,
\begin{equation}
   X_m(t) = x_{|Y_- | + \frac{m+1}{2}}(t), \label{def}
\end{equation}
so we can use some previous results on $x_i(t)$ in the ASEP with finite $Y$. In Section 2 we review the integral formula for $\mathbb{P}_Y(x_i(t) \leq x)$ when $Y$ is finite and in Section 3 by using a new combinatorial identity that arises in the alternating initial condition we derive $\mathbb{P}_{2\mathbb{Z}-1}(X_m(t) \leq x)$  and give the one-sided version of it. The identity is given in Lemma 3.2 and our final results for $\mathbb{P}_{2\mathbb{Z}-1}(X_m(t) \leq x)$ and its one-sided version are given in (\ref{final}) and (\ref{oneside}), respectively.

\section{Some known results}
\indent The integral formulas for the probability of a particle's position for the ASEP with a finite initial condition $Y$ were developed \cite{TW1,TW2,TW3}. These integral formulas have different forms depending on the contours we choose. The integral formula of Theorem 5.2 in \cite{TW1} which is over \textit{large} contours  can be used to derive the formula for step initial condition that positive integers are occupied as shown in Corollary. Alternatively, if  all negative integers are initially occupied and other sites are empty, the integral formula of Theorem 5.1 in \cite{TW1} which is over \textit{small} contours is needed. These are because we need geometric series which arise to be convergent.  Hence, for the alternating initial condition that has infinitely many particles on both sides of any reference site, a recently developed integral formula over both large and small contours is required \cite{TW3}.   In this section we review the formula developed in \cite{TW3} which is the starting point of this paper. \\
\indent First, let $\tau =p/q$ and $\xi=\xi(k)=(\xi_1,\cdots,\xi_{k_+},\xi_{-1},\cdots,\xi_{-k_-})$ with $k=k_++k_-$. We define
\begin{equation*}
 \varepsilon(\xi_i) := \frac{p}{\xi_i} + q\xi_i -1,\hspace{0.3cm} f(\xi_i,\xi_j) := \frac{\xi_j-\xi_i}{p+q\xi_i\xi_j -\xi_i}
\end{equation*}
{and}
\begin{equation*}
 I(x,\xi) := \prod_{i<j}f(\xi_i,\xi_j)\prod_i\frac{\xi_i^xe^{\varepsilon(\xi_i)t}}{1-\xi_i}.
\end{equation*}
Notice that $I(x,\xi)$ depends on $t$ but we omit it in the notation. Given two sets $U$ and $V$ of integers
\begin{equation*}
 \sigma(U,V) := \# \{(u,v) : u \in U,\hspace{0.1cm} v \in V \hspace{0.1cm}\textrm{ and }\hspace{0.1cm} u \geq v.\}
\end{equation*}
and recall the definition of the $\tau$-binomial coefficient,
\begin{equation*}
 \left[
   \begin{array}{c}
     N \\
     n \\
   \end{array}
 \right]_{\tau} = \frac{(1-\tau^N)(1-\tau^{N-1})\cdots(1-\tau^{N-n+1})}{(1-\tau)(1-\tau^2)\cdots(1-\tau^n)}.
\end{equation*}
We assume that $Y\subset \mathbb{Z} $ and $Y= Y_- \bigcup Y_+ $ is a finite set  where $Y_-$ and $Y_+$ are disjoint and all members of $Y_+$ are greater than all members in $Y_-$.
For $S_{\pm}\subset Y_{\pm}$ we set $|S_{\pm}| = k_{\pm}$ and use positive indices for $S_+$ and negative indices for $S_-$. In other words, we set $S_-=\{s_{-1},s_{-2}, \cdots , s_{-k_-}\}$ and $S_+=\{s_1,s_2, \cdots, s_{k_+}\}$.
  Then the distribution of $x_m(t)$, the $m$th particle's position from the leftmost particle at time $t$ is given by
\begin{equation}
\mathbb{P}_Y(x_m(t)\leq x) =\sum_{k_\pm\geq 0}{\sum_{\substack{S_\pm \subset Y_\pm,\\|S_\pm|=k_\pm}}}c_{m,S_-,S_+}\int_{\mathcal{C_R}^{k_+}}\int_{\mathcal{C}_r^{k_-}}
I(x,\xi)\prod_i\xi_i^{-s_i}\prod_id\xi_i \label{initial1}
\end{equation}
where
\begin{eqnarray*}
 c_{m,S_-,S_+} &=& (-1)^{m+|Y_-\setminus S_-|}\tau^{\frac{m(m-1)}{2}-k_+m+\sigma(S_+,Y)+\sigma(Y_-,Y_-\setminus S_-)-m|Y_-|+\frac{k_-(k_-+1)}{2}}\\
  & & \hspace{0.5cm}\times q^{\frac{k(k-1)}{2}}\left[
                                                 \begin{array}{c}
                                                   k-1 \\
                                                   m-|Y_-\setminus S_-|-1 \\
                                                 \end{array}
                                               \right]_{\tau}
\end{eqnarray*}
and $\mathcal{C_R}$ ($\mathcal{C}_r$) is a circle with center zero and radius $\mathcal{R}$ ($r$). Here $\mathcal{R}$ ($r$) is so large (small) that all the poles of $\prod_{i<j}f(\xi_i,\xi_j)$ in $I(x,\xi)$ lie inside (outside) $\mathcal{C_R}$ ($\mathcal{C}_r$).
The product in the integrand is over all positive and negative indices, and integrals over $\mathcal{C_R}$ are for variables of positive indices and integrals over $\mathcal{C}_r$ are for variables of negative indices. If we set $Y_-=\emptyset$, (\ref{initial1}) exactly becomes (3) in \cite{TW3}, from which we are able to work on the (positive) one-sided  step initial condition or the (positive) one-sided alternating initial condition.\\
\indent Additionally, in Section \Rmnum{5} of \cite{TW3},  a  variant of $c_{m,S_-,S_+}$ was computed. It is the coefficient \begin{eqnarray}
c_{m+|Y_- |,S_-,S_+} &=& (-1)^{m + k_-}\tau^{\sigma(S_+,Y_+\setminus S_+)-\sigma(Y_-\setminus S_-,S_-) + \frac{m(m-1)}{2}+\frac{k_+(k_++1)}{2}-mk_+}\nonumber \\
 & & \label{coef}\hspace{0.5cm}\times
q^{k(k-1)/2}\left[
                                                              \begin{array}{c}
                                                                k-1 \\
                                                               m + k_- -1 \\
                                                              \end{array}
                                                            \right]_\tau.
\end{eqnarray}
and will be used in the next section.
\section{Alternating initial condition on $\mathbb{Z}$}
\subsection{Distribution of a particle's position at time $t$}
In this section we derive $\mathbb{P}_{2\mathbb{Z}-1}(X_m(t) \leq x)$ as the limiting case $N \rightarrow \infty$ of the formula (\ref{initial1}) with $Y=\{-2N+1, -2N+3, \cdots, 2N-3,2N-1\}$ and with $ x_{|Y_- | + \frac{m+1}{2}}(t)$ instead of $x_m(t)$. Then $S_+$ and $S_-$, subsets of $Y_+$ and $Y_-$, respectively,  may be written as
 \begin{equation*}
  S_+ =\{2i_1-1, 2(i_1+i_2)-1, \cdots, 2(i_1+\cdots + i_{k_+}) -1\},\hspace{0.1cm} (i_1,\cdots,i_{k_+} \in \mathbb{N},
  \hspace{0.1cm} k_+ \leq N)
 \end{equation*}
 and
 \begin{equation*}
  S_- = \{-2j_1+1, -2(j_1+j_2)+1, \cdots, -2(j_1+\cdots + j_{k_-}) +1\},\hspace{0.05cm} (j_1,\cdots,j_{k_-} \in \mathbb{N}, \hspace{0.05cm}k_- \leq N).
 \end{equation*}
Since we are working with $x_{|Y_- | + \frac{m+1}{2}}(t)=X_m(t)$, the coefficient for $X_m(t)$ becomes
 \begin{eqnarray*}
c_{|Y_- | + \frac{m+1}{2},S_-,S_+} &=& (-1)^{\frac{m+1}{2} + k_-}\tau^{\big(\sigma(S_+,Y_+\setminus S_+) - \sigma(Y_-\setminus S_-,S_-) +\frac{m^2-1}{8}+\frac{k_+(k_++1)}{2}-\frac{m+1}{2}k_+\big)} \\
& & \hspace{1cm}\times
q^{k(k-1)/2}\left[
                                                              \begin{array}{c}
                                                                k-1 \\
                                                                \frac{m-1}{2} + k_- \\
                                                              \end{array}
                                                            \right]_\tau
\end{eqnarray*}
by replacing $m$ by $\frac{m+1}{2}$ in  (\ref{coef}) .
 Noticing that $\sigma(S_+,Y_+\setminus S_+) - \sigma(Y_-\setminus S_-,S_-)$ in $c_{|Y_- | + \frac{m+1}{2},S_-,S_+}$ and $\prod_i\xi_i^{-s_i}$ in the integrand in (\ref{initial1}) depend on $S_\pm$, we consider the sum
\begin{eqnarray*}
 & &\sum_{\substack{S_\pm \subset Y_\pm,\\ |S_\pm|=k_\pm}}\tau^{\sigma(S_+,Y_+\setminus S_+) - \sigma(Y_-\setminus S_-,S_-)}\prod_i\xi_i^{-s_i} \\
 &=&\sum_{\substack{S_+ \subset Y_+,\\ |S_+|=k_+}}\tau^{\sigma(S_+,Y_+\setminus S_+)}\prod_{i>0}\xi_i^{-s_i}\cdot
 \sum_{\substack{S_- \subset Y_-, \\|S_-|=k_-}}\tau^{-\sigma(Y_-\setminus S_-,S_-)}\prod_{i<0}\xi_i^{-s_i}.
\end{eqnarray*}
Let us compute the first sum. First,
\begin{eqnarray*}
  \prod_{i>0}\xi_i^{-s_i}& =& \xi_1^{-2i_1 +1}\xi_2^{-2i_1-2i_2 +1}\cdots\xi_{k_+}^{-2i_1-\cdots -2i_{k_+}+1} \\
   &=&(\xi_1\cdots\xi_{k_+})(\xi_1\cdots\xi_{k_+})^{-2i_1}(\xi_2 \cdots \xi_{k_+})^{-2i_2} \cdots \xi_{k_+}^{-2i_{k_+}}.
\end{eqnarray*}
Now, observe that the number of points in $Y_+$ less than or equal to $s_l=2(i_1+\cdots +i_l)-1$ is $i_1+\cdots +i_l$ and the number of points in $S_+$ less than or equal to $s_l$ is $l$.
Hence, the number of pairs $(s_l,y)$ with $s_l \geq y$ where $s_l \in S_+$ and $y \in Y_+ \setminus S_+$   is $i_1+\cdots + i_l -l$, and thus
\begin{eqnarray*}
 \sigma(S_+,Y_+\setminus S_+) &=& (i_1-1) + (i_1+i_2 -2) + \cdots +(i_1+\cdots + i_{k_+} - k_+) \\
  &=& k_+i_1 + (k_+-1)i_2 + \cdots + i_{k_+} - \frac{k_+(k_+ +1)}{2}.
\end{eqnarray*}
Denoting the first sum by $\varphi_+ ( k_+,\xi_+)$,
\begin{eqnarray*}
\varphi_+ ( k_+,\xi_+)&=&\sum_{\substack{S_+ \subset Y_+,\\ |S_+|=k_+}}\tau^{\sigma(S_+,Y_+\setminus S_+)}\prod_{i>0}\xi_i^{-s_i}\\
&=&\frac{\xi_1\cdots\xi_{k_+}}{\tau^{\frac{k_+(k_+ +1)}{2}}}\sum_{\substack{S_+ \subset Y_+,\\ |S_+|=k_+}} \Big(\frac{\tau^{k_+}}{(\xi_1\cdots\xi_{k_+})^2}\Big)^{i_1}\Big(\frac{\tau^{k_+-1}}{(\xi_2\cdots\xi_k)^2}\Big)^{i_2}
\cdots\Big(\frac{\tau}{\xi_{k_+}^2}\Big)^{i_{k_+}}.
\end{eqnarray*}
If we assume that $Y_+ =\{2i-1: i =1,2,\cdots\}$, the sum implies  geometric series, which converge because we choose large contours for variables with positive indices as shown in (\ref{initial1}). Hence,
\begin{eqnarray*}
\varphi_+ ( k_+,\xi_+)&=&\frac{\xi_1\cdots\xi_{k_+}}{\tau^{\frac{k_+(k_+ +1)}{2}}}\sum_{i_1,\cdots,i_{k_+}=1}^{\infty} \Big(\frac{\tau^{k_+}}{(\xi_1\cdots\xi_{k_+})^2}\Big)^{i_1}\Big(\frac{\tau^{k_+-1}}{(\xi_2\cdots\xi_{k_+})^2}\Big)^{i_2}
\cdots\Big(\frac{\tau}{\xi_{k_+}^2}\Big)^{i_{k_+}} \\
&=&\frac{\xi_1 \cdots \xi_{k_+}}{\big((\xi_1\cdots\xi_{k_+})^2 - \tau^{k_+}\big)\big((\xi_2\cdots\xi_{k_+})^2 - \tau^{k_+-1}\big)\cdots \big(\xi_{k_+}^2 -\tau\big)}.
\end{eqnarray*}
Likewise for negative indices,
\begin{eqnarray*}
  \prod_{i<0}\xi_i^{-s_i}& =& \xi_{-1}^{2j_1 -1}\xi_{-2}^{2j_1+2j_2 -1}\cdots\xi_{-k_-}^{2j_1+\cdots +2j_{k_-}-1} \\
   &=&(\xi_{-1}\cdots\xi_{-k_-})^{-1}(\xi_{-1}\cdots\xi_{-k_-})^{2j_1}(\xi_{-2} \cdots \xi_{-k_-})^{2j_2} \cdots \xi_{-k_-}^{2j_{k_-}}.
\end{eqnarray*}
For ${\sigma(Y_-\setminus S_-,S_-)}$ we set
\begin{equation*}
 \tilde{s}_l = -s_{-l} ,\hspace{0.2cm} \tilde{S}_+ = -S_-\hspace{0.2cm}\textrm{and}\hspace{0.2cm} \tilde{Y}_+ = -Y_-,
\end{equation*}
and then it is easily seen that
\begin{equation*}
 \sigma(Y_-\setminus S_-,S_-) = \sigma(\tilde{S}_+,\tilde{Y}_+\setminus \tilde{S}_+)
\end{equation*}
and so
\begin{eqnarray*}
 \sigma(\tilde{S}_+,\tilde{Y}_+\setminus \tilde{S}_+) &=& (j_1-1) + (j_1+j_2 -2) + \cdots +(j_1+\cdots +j_{k_-} - k_-) \\
  &=& k_-j_1 + (k_--1)j_2 + \cdots + j_{k_-} - \frac{k_-(k_- +1)}{2}.
\end{eqnarray*}
 Recalling that we choose small contours for negative indices so that geometric series for negative indices converge, one can obtain for $Y_-=\{2i-1: i=0,-1,-2\cdots\}$
\begin{eqnarray*}
& &\sum_{\substack{S_- \subset Y_-,\\ |S_-|=k_-}}\tau^{-\sigma(Y_-\setminus S_-,S_-)}\prod_{i<0}\xi_i^{-s_i} \\
  &=&\frac{\tau^{\frac{k_-(k_- +1)}{2}}}{\xi_{-1}\cdots\xi_{-k_-}}\sum_{j_{1},\cdots,j_{k_-}=1}^{\infty}
  \Big(\frac{(\xi_{-1}\cdots\xi_{-k_-})^2}{\tau^{k_-}}\Big)^{j_1}\Big(\frac{(\xi_{-2}\cdots\xi_{-k_-})^2}{\tau^{k_--1}}\Big)^{j_2}
\cdots\Big(\frac{\xi_{-k_-}^2}{\tau}\Big)^{j_{k_-}}
  \\
 &=&\tau^{\frac{k_-(k_-+1)}{2}}\cdot\frac{\xi_{-1}^1\xi_{-2}^3 \cdots \xi_{-k_-}^{2k_--1}}{\big((\tau^{k_-} - (\xi_{-1}\cdots\xi_{-k_-})^2 \big)\big(\tau^{k_--1}-(\xi_{-2}\cdots\xi_{-k_-})^2\big)\cdots \big(\tau - \xi_{-k_-}^2 \big)},
\end{eqnarray*}
and let
\begin{eqnarray*}
 & &\varphi_-(k_-,\xi_-)\\
 &:=& \frac{\xi_{-1}^1\xi_{-2}^3 \cdots \xi_{-k_-}^{2k_--1}}{\big((\tau^{k_-} - (\xi_{-1}\cdots\xi_{-k_-})^2 \big)\big(\tau^{k_--1}-(\xi_{-2}\cdots\xi_{-k_-})^2\big)
 \cdots \big(\tau - \xi_{-k_-}^2 \big)}.
\end{eqnarray*}
Hence we obtained
\begin{equation}
 \mathbb{P}_{2\mathbb{Z}-1}(X_m(t) \leq x)
 =\sum_{k_\pm\geq 0}c_{m,k_{\pm}}\int_{\mathcal{C_R}^{k_+}}\int_{\mathcal{C}_r^{k_-}}
I(x,\xi)\varphi_- ( k_-,\xi_-)\varphi_+ ( k_+,\xi_+)\prod_id\xi_i \label{form}
\end{equation}
where
\begin{equation}
  c_{m,k_{\pm}} = (-1)^{\frac{m+1}{2}+k_-}\tau^{\big(\frac{m^2-1}{8} + \frac{k(k+1)}{2}-k_+k_- - \frac{m+1}{2}k_+\big)}q^{\frac{k(k-1)}{2}}
  \left[
                                                              \begin{array}{c}
                                                                k-1 \\
                                                                \frac{m-1}{2} + k_- \\
                                                              \end{array}
                                                            \right]_\tau.
\end{equation}
\begin{remark}
 In Corollary in \cite{TW1} the identity (1.7) in \cite{TW1} was used in obtaining $\mathbb{P}_{\mathbb{Z}^+}(x_m(t)=x)$ for step initial condition. We give the $\mathbb{P}_{\mathbb{Z}^+}(x_m(t)\leq x)$ to be compared with the case of the alternating initial condition in the later section.
 \begin{eqnarray}
  \mathbb{P}_{\mathbb{Z}^+}(x_m(t) \leq x) &=& \label{step}(-1)^m\sum_{k \geq m}\frac{\tau^{(k-m)(k-m+1)/2}}{(1+\tau)^{k(k-1)}k!}
  \left[
                                                                                                             \begin{array}{c}
                                                                                                               k-1 \\
                                                                                                               k-m \\
                                                                                                             \end{array}
                                                                                                           \right]_{\tau}  \\
      & & \nonumber \hspace{0.2cm}\times\int_{\mathcal{C_{R}}}\cdots  \int_{\mathcal{C_{R}}}  \prod_{i\neq j}\frac{\xi_j-\xi_i}{p+q\xi_i\xi_j -\xi_i}\prod_{i}\frac{\xi_i^xe^{t\varepsilon (\xi_i)}}{(1-\xi_i)(\xi_i-\tau)}\prod_id\xi_i.
  \end{eqnarray}
  Moreover, the integrand  in  $\mathbb{P}_{\mathbb{Z}^+}(x_m(t) \leq x)$ could be expressed as a determinantal form by using another identity (3) in \cite{TW4}. It states that
  \begin{eqnarray}
   \mathbb{P}_{\mathbb{Z}^+}(x_m(t) \leq x) &=&  \sum_{k \geq m} c_{m,k,\tau}\int_{\mathcal{C_{R}}}\cdots  \int_{\mathcal{C_{R}}} \det(K(\xi_i,\xi_j))_{1\leq i,j \leq k}
   \prod_id\xi_i
  \end{eqnarray}
  where
  \begin{equation}
   K(\xi,\xi') =\frac{\xi^x e^{\varepsilon(\xi)t}}{p+q\xi\xi'- \xi} \label{operator}
  \end{equation}
  and $c_{m,k,\tau}$ is a constant depending on $m,k,$ and $\tau$. Here we introduce the identity for the later use.
  \end{remark}
  \begin{lemma}\cite{TW4}
   \begin{equation*}
 \det \Big(\frac{1}{p+q\xi_i\xi_j -\xi_i}\Big)_{1\leq i,j \leq k} = (-1)^k(pq)^{\frac{k(k-1)}{2}}q^{-k}\prod_{i \neq j}\frac{\xi_j -\xi_i}{p+q\xi_i\xi_j - \xi_i}\prod_i\frac{1}{(1-\xi_i)(\xi_i-\tau)}.\label{identity1}
\end{equation*}
\end{lemma}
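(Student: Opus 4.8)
The plan is to recognize the matrix as a Cauchy matrix in disguise. Writing $\tau=p/q$ and using $p+q=1$, I would first factor $q\xi_i$ out of the $i$-th row by noting
\[
p+q\xi_i\xi_j-\xi_i = q\xi_i\bigl(\xi_j-y_i\bigr),\qquad y_i:=\frac{\xi_i-p}{q\xi_i}.
\]
This exhibits the given matrix as $\tfrac{1}{q\xi_i}$ times the Cauchy-type matrix $\bigl(\tfrac{1}{\xi_j-y_i}\bigr)$, so that
\[
\det\Bigl(\tfrac{1}{p+q\xi_i\xi_j-\xi_i}\Bigr) = (-1)^k\Bigl(\prod_i\tfrac{1}{q\xi_i}\Bigr)\det\Bigl(\tfrac{1}{y_i-\xi_j}\Bigr),
\]
and the classical Cauchy determinant formula evaluates the last factor as $\dfrac{\prod_{i<j}(y_i-y_j)(\xi_j-\xi_i)}{\prod_{i,j}(y_i-\xi_j)}$.

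Next I would reduce every factor back to the $\xi$-variables using three elementary identities. For the off-diagonal denominator factors, $y_i-\xi_j=-\tfrac{p+q\xi_i\xi_j-\xi_i}{q\xi_i}$ reproduces the product $\prod_{i\neq j}(p+q\xi_i\xi_j-\xi_i)$ up to powers of $q$ and of $\xi_i$. For the diagonal denominator factors I would use the factorization $q\xi^2-\xi+p=q(\xi-1)(\xi-\tau)$ — this is exactly where $p+q=1$ enters, since the two roots of $q\xi^2-\xi+p$ are $1$ and $\tau$ — which gives $y_i-\xi_i=\tfrac{(1-\xi_i)(\xi_i-\tau)}{\xi_i}$ and hence produces the factor $\prod_i\tfrac{1}{(1-\xi_i)(\xi_i-\tau)}$ on the right-hand side. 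Finally, the alternating numerator is handled by $y_i-y_j=\tau\,\tfrac{\xi_i-\xi_j}{\xi_i\xi_j}$, converting $\prod_{i<j}(y_i-y_j)$ into $\prod_{i<j}(\xi_i-\xi_j)$ together with a power of $\tau$ and of the $\xi_i$.

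Substituting these into the Cauchy formula, the spurious powers $\prod_i\xi_i^{\pm(k-1)}$ cancel between numerator and denominator. The two alternating products in the numerator combine, via $\prod_{i<j}(\xi_i-\xi_j)(\xi_j-\xi_i)=\prod_{i\neq j}(\xi_j-\xi_i)$, into precisely the ordered product on the right-hand side, so the difference factors assemble into $\prod_{i\neq j}\tfrac{\xi_j-\xi_i}{p+q\xi_i\xi_j-\xi_i}$. The main obstacle is then purely the bookkeeping of the scalar prefactor: one must track the overall sign $(-1)^k$ and, crucially, combine the power $\tau^{k(k-1)/2}$ coming from the $y_i-y_j$ factors with the power $q^{k(k-1)}$ accumulated from clearing the $q\xi_i$ denominators, and check that together with the leading $q^{-k}$ they collapse to $(pq)^{k(k-1)/2}q^{-k}$. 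This last collapse rests on the single algebraic fact $\tau^{k(k-1)/2}q^{k(k-1)}=(pq)^{k(k-1)/2}$, again using $\tau=p/q$; verifying that the prefactor matches $(-1)^k(pq)^{k(k-1)/2}q^{-k}$ exactly is the only genuinely delicate step, everything else being routine substitution.
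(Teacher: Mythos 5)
Your argument is correct, and all of the identities it rests on check out: $p+q\xi_i\xi_j-\xi_i=q\xi_i(\xi_j-y_i)$ with $y_i=(\xi_i-p)/(q\xi_i)$; the factorization $q\xi^2-\xi+p=q(\xi-1)(\xi-\tau)$, which is indeed the only place $p+q=1$ is used; $y_i-y_j=\tau(\xi_i-\xi_j)/(\xi_i\xi_j)$; and the collapse $\tau^{k(k-1)/2}\,q^{k(k-1)}=(pq)^{k(k-1)/2}$. The powers of the $\xi_i$ cancel exactly as you claim, the off-diagonal denominator factors contribute sign $(-1)^{k(k-1)}=1$ so that only the row-flip factor $(-1)^k$ survives, and the $k=1$ case $1/(p+q\xi^2-\xi)=-1/\bigl(q(1-\xi)(\xi-\tau)\bigr)$ confirms the overall constant. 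Note that the paper itself contains no proof of this lemma --- it is quoted from \cite{TW4} --- so there is no internal argument to compare against; your reduction to the classical Cauchy determinant is the standard proof of this identity and is essentially the argument underlying the cited reference, so your write-up supplies self-contained detail the paper delegates to a citation.
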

\subsection{Symmetrization and Combinatorial identity}
As mentioned in \textit{Remark 3.1} a combinatorial identity was found to derive the integral formula of the distribution in case of step initial condition and the integrand of the formula can be expressed as a determinant. This identity is associated with a special initial structure of the system, that is, the step initial condition. So we may expect to have a new identity associated with the alternating initial condition. In this subsection we find the new combinatorial identity\footnote{This identity was conjectured by Craig A. Tracy through private communication.} and obtain an alternate form of (\ref{form}) by using the identity.

\begin{lemma}
Let $\tau= \frac{p}{q}$ and $p+q=1.$ For $k \in \mathbb{N}$
\begin{eqnarray*}
& &\hspace{0.3cm}\sum_{\sigma \in \mathbb{S}_k}\prod_{i>j}\frac{p+q\xi_{\sigma(i)}\xi_{\sigma(j)}-\xi_{\sigma(i)}}{\xi_{\sigma(j)} - \xi_{\sigma(i)}} \times \\
& &\hspace{0.3cm}\frac{1}{(\xi_{\sigma(1)}^2\xi_{\sigma(2)}^2\cdots\xi_{\sigma(k)}^2-\tau^k)
(\xi_{\sigma(2)}^2\xi_{\sigma(3)}^2\cdots\xi_{\sigma(k)}^2-\tau^{k-1})\cdots(\xi_{\sigma(k)}^2-\tau)}\\
&= &  \frac{1}{(1+\tau)^{k(k-1)/2}}\prod_{i<j}\frac{1+\tau - (\xi_i + \xi_j)}{\tau-\xi_i\xi_j}\prod_i\frac{1}{\xi_i^2-\tau}.
\end{eqnarray*}
\end{lemma}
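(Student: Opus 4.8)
The plan is to prove the identity by induction on $k$, reducing the full symmetric-group sum to a single-index sum that can be attacked by residue calculus. Write the left-hand side as $L_k(\xi_1,\dots,\xi_k)$ and the right-hand side as $R_k(\xi_1,\dots,\xi_k)$. The base case $k=1$ is immediate, since both sides equal $1/(\xi_1^2-\tau)$ (the products over $i>j$ and $i<j$ are empty).

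For the inductive step I would first note that the leading denominator factor $\xi_{\sigma(1)}^2\cdots\xi_{\sigma(k)}^2-\tau^k=\prod_i\xi_i^2-\tau^k$ is symmetric, hence independent of $\sigma$, and factors out of the whole sum. I would then organize the sum over $\mathbb{S}_k$ by the value $a=\sigma(1)$. Using $f(\xi_i,\xi_j)=(\xi_j-\xi_i)/(p+q\xi_i\xi_j-\xi_i)$, the factors of $\prod_{i>j}1/f(\xi_{\sigma(i)},\xi_{\sigma(j)})$ that touch position $1$ collapse to $\prod_{b\ne a}1/f(\xi_b,\xi_a)$, which does not depend on the order of $\sigma(2),\dots,\sigma(k)$; the remaining factors together with the reduced denominators are exactly $L_{k-1}$ in the variables $\{\xi_b:b\ne a\}$. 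This gives the recursion
\begin{equation*}
L_k=\frac{1}{\prod_i\xi_i^2-\tau^k}\sum_{a=1}^k\Big(\prod_{b\ne a}\frac{p+q\xi_a\xi_b-\xi_b}{\xi_a-\xi_b}\Big)\,L_{k-1}(\widehat{\xi_a}),
\end{equation*}
where $\widehat{\xi_a}$ denotes omission of $\xi_a$.

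Substituting the induction hypothesis $L_{k-1}=R_{k-1}$ and dividing through by $R_k$ (computing the ratio $R_{k-1}(\widehat{\xi_a})/R_k$ with $p=\tau/(1+\tau)$, $q=1/(1+\tau)$) reduces the lemma to the auxiliary single-sum identity
\begin{equation*}
\sum_{a=1}^k(\xi_a^2-\tau)\prod_{b\ne a}\frac{(p+q\xi_a\xi_b-\xi_b)(\tau-\xi_a\xi_b)}{(\xi_a-\xi_b)(1+\tau-\xi_a-\xi_b)}=\frac{\prod_{i=1}^k\xi_i^2-\tau^k}{(1+\tau)^{k-1}}.
\end{equation*}
I would check this reduction explicitly at $k=2$ as a consistency test; there the two summands combine over the common factor $1+\tau-\xi_1-\xi_2$ and, after using $p+q\xi_1\xi_2=(\tau+\xi_1\xi_2)/(1+\tau)$, collapse to $(\xi_1^2\xi_2^2-\tau^2)/(1+\tau)$.

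It remains to prove the auxiliary identity, which is the heart of the matter. I would realize the $a$-th summand as the residue at $z=\xi_a$ of a single rational function $\Psi(z)$ assembled from the products above, arranged so that the factor $z^2-\tau$ cancels the would-be poles at $z=\pm\sqrt{\tau}$ and so that the $b=a$ factor is corrected by an explicit rational prefactor (here one uses $p+qz^2-z=q(z-1)(z-\tau)$). The left-hand side is then $\sum_a\mathrm{Res}_{z=\xi_a}\Psi$, and the residue theorem on the sphere rewrites it through the remaining residues, at $z=1+\tau-\xi_b$, at $z=1$ and $z=\tau$, and at $z=\infty$; simplifying these with $q(1+\tau)=1$ should produce $R_k$. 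Equivalently, one can argue directly that the sum is a symmetric polynomial: the apparent poles at $\xi_a=\xi_b$ cancel between terms $a$ and $b$ by the antisymmetry of $1/(\xi_a-\xi_b)$, and the degree is controlled because the term $a=k$ grows like $\xi_k^2\,q^{k-1}\prod_{b\ne k}\xi_b^2$ while the others stay bounded as $\xi_k\to\infty$, matching the growth of $R_k$ since $q^{k-1}=(1+\tau)^{-(k-1)}$. The hard part, in either formulation, will be the poles along $\xi_a+\xi_b=1+\tau$ (equivalently the residues at $z=1+\tau-\xi_b$): unlike the $\xi_a=\xi_b$ poles these appear with the \emph{same} sign in terms $a$ and $b$, so their cancellation is a genuine algebraic relation among the coupled factors $(p+q\xi_a\xi_b-\xi_b)$, $(\tau-\xi_a\xi_b)$ and $(1+\tau-\xi_a-\xi_b)$ restricted to that locus, and it is precisely here that the constraint $\tau=p/q$, $p+q=1$ must be invoked.
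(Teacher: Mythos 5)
Your proposal follows the paper's own proof essentially verbatim: the same induction on $k$ with the recursion obtained by conditioning on $\sigma(1)$ (after factoring out the symmetric factor $\prod_i\xi_i^2-\tau^k$), the same reduction to the auxiliary single-sum identity $\sum_{a}(\xi_a^2-\tau)\prod_{b\neq a}\frac{(p+q\xi_a\xi_b-\xi_b)(\tau-\xi_a\xi_b)}{(\xi_a-\xi_b)(1+\tau-\xi_a-\xi_b)}=\frac{\prod_i\xi_i^2-\tau^k}{(1+\tau)^{k-1}}$, and the same residue-theorem evaluation of a single rational function, whose poles are exactly the ones you list at $\xi_a$, $1+\tau-\xi_b$, $1$, $\tau$ and $\infty$, with the correcting prefactor being $(2z-1-\tau)/\big((z-1)(qz-p)\big)$ via $p+qz^2-z=q(z-1)(z-\tau)$. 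The one point where your framing differs from how the computation actually closes: the residues at $z=1+\tau-\xi_b$ do not need to cancel anything --- the function is antisymmetric under $z\mapsto 1+\tau-z$ (this is what the factor $2z-1-\tau$ buys), so $\mathbf{Res}_{z=1+\tau-\xi_b}=\mathbf{Res}_{z=\xi_b}$, and these doubled residues, together with the contributions $p^k/q$ from each of $z=1$ and $z=\tau$, match the doubled residue $2q^{k-1}\prod_l\xi_l^2$ at infinity.
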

\begin{proof}
The equality clearly holds for $k=1$. Denote the left hand side by $L_k(\xi_1,\cdots,\xi_k)$ and the right hand side by $R_k(\xi_1,\cdots, \xi_k)$, and assume that the identity holds for $k-1$, i.e, $L_{k-1}=R_{k-1}$. Let $\sigma(1) =l$. We change the sum over all permutations in $L_k$ to the double sum over $l=1,2,\cdots,k$ and $(\sigma(2), \cdots, \sigma(k)) \in \mathbb{S}_{k-1}$, that is,  $\sum_{\sigma \in \mathbb{S}_k}=\sum_{l=1}^k\sum_{(\sigma(2),\cdots,\sigma(k)) \in \mathbb{S}_{k-1}}$. Observe that $$
 \xi_{\sigma(1)}^2\xi_{\sigma(2)}^2\cdots\xi_{\sigma(k)}^2-\tau^k = \xi_{1}^2\xi_{2}^2\cdots\xi_{k}^2-\tau^k
$$ for all $\sigma \in \mathbb{S}_k$ and
\begin{eqnarray*}
 & &\prod_{\substack{
      i>j, \\
      i,j=1,\cdots, k}}
 \frac{p+q\xi_{\sigma(i)}\xi_{\sigma(j)}-\xi_{\sigma(i)}}{\xi_{\sigma(j)} - \xi_{\sigma(i)}} \\ &=&\prod_{i=2}^k\frac{p+q\xi_{\sigma(i)}\xi_{\sigma(1)}-\xi_{\sigma(i)}}{\xi_{\sigma(1)} - \xi_{\sigma(i)}}
 \prod_{\substack{
      i>j, \\
      i,j=2,\cdots, k}}
  \frac{p+q\xi_{\sigma(i)}\xi_{\sigma(j)}-\xi_{\sigma(i)}}{\xi_{\sigma(j)} - \xi_{\sigma(i)}} \\
  &=& \prod_{i\neq l}\frac{p+q\xi_{i}\xi_{l}-\xi_{i}}{\xi_{l} - \xi_{i}}
 \prod_{\substack{
      i>j, \\
      i,j=2,\cdots, k}}
  \frac{p+q\xi_{\sigma(i)}\xi_{\sigma(j)}-\xi_{\sigma(i)}}{\xi_{\sigma(j)} - \xi_{\sigma(i)}}
\end{eqnarray*}
and
\begin{eqnarray*}
& &L_{k-1}(\xi_1,\cdots,\xi_{l-1},\xi_{l+1},\cdots,\xi_k)=\sum_{\substack{(\sigma(2),\cdots,\sigma(k))  \\ \in \mathbb{S}_{k-1}}}\prod_{\substack{
      i>j, \\
      i,j=2,\cdots, k}} \frac{p+q\xi_{\sigma(i)}\xi_{\sigma(j)}-\xi_{\sigma(i)}}{\xi_{\sigma(j)} - \xi_{\sigma(i)}}  \\
  & &\hspace{2cm} \times\frac{1}{(\xi_{\sigma(2)}^2\cdots\xi_{\sigma(k)}^2-\tau^k)
(\xi_{\sigma(3)}^2\cdots\xi_{\sigma(k)}^2-\tau^{k-1})\cdots(\xi_{\sigma(k)}^2-\tau)}.
\end{eqnarray*}
Hence
\begin{eqnarray*}
L_k(\xi_1,\cdots, \xi_k) &=&\frac{1}{\xi_1^2\xi_2^2\cdots\xi_k^2-\tau^k}\sum_{l=1}^k\prod_{i\neq l}\frac{p+q\xi_{i}\xi_{l}-\xi_{i}}{\xi_{l} - \xi_{i}}L_{k-1}(\xi_1,\cdots,\xi_{l-1},\xi_{l+1},\cdots,\xi_k)\\
&=& \frac{1}{\xi_1^2\xi_2^2\cdots\xi_k^2-\tau^k}\sum_{l=1}^k\prod_{i\neq l}\frac{p+q\xi_{i}\xi_{l}-\xi_{i}}{\xi_{l} - \xi_{i}}R_{k-1}(\xi_1,\cdots,\xi_{l-1},\xi_{l+1},\cdots,\xi_k),
\end{eqnarray*}
where the second equality comes from the induction hypothesis. Our goal is to show that
\begin{equation*}
 \frac{1}{\xi_1^2\xi_2^2\cdots\xi_k^2-\tau^k}\sum_{l=1}^k\prod_{i\neq l}\frac{p+q\xi_{i}\xi_{l}-\xi_{i}}{\xi_{l} - \xi_{i}}R_{k-1}(\xi_1,\cdots,\xi_{l-1},\xi_{l+1},\cdots,\xi_k) =R_k(\xi_1,\cdots,\xi_k)
\end{equation*}
i.e.,
\begin{equation*}
 \sum_{l=1}^k\prod_{i\neq l}\frac{p+q\xi_{i}\xi_{l}-\xi_{i}}{\xi_{l} - \xi_{i}}\frac{R_{k-1}(\xi_1,\cdots,\xi_{l-1},\xi_{l+1},\cdots,\xi_k)}{R_k(\xi_1,\cdots,\xi_k)} =\xi_1^2\xi_2^2\cdots\xi_k^2-\tau^k.
\end{equation*}
Recalling the form of $R_k$, what we want to show is
\begin{equation}
 \frac{\xi_1^2\xi_2^2\cdots\xi_k^2-\tau^k}{(1+\tau)^{k-1}}=\sum_{l=1}^k(\xi_l^2-\tau)\prod_{i\neq l}\frac{p+q\xi_{i}\xi_{l}-\xi_{i}}{\xi_{l} - \xi_{i}}\frac{\tau-\xi_l\xi_i}{1+\tau-\xi_l-\xi_i}. \label{ident1}
\end{equation}
For some technical reasons we multiply by $(p+q\xi_l^2-\xi_l)(1+\tau-2\xi_l)$ both the numerator and the denominator of the right hand side, and then using $p+q\xi_l^2-\xi_l=(q\xi_l-p)(\xi_l-1)$, (\ref{ident1})  becomes
\begin{equation}
  \frac{\xi_1^2\xi_2^2\cdots\xi_k^2-\tau^k}{(1+\tau)^{k-1}}=\sum_{l=1}^k\frac{\prod_ip+q\xi_{i}\xi_{l}-\xi_{i}}{\prod_{i\neq l}\xi_{l} - \xi_{i}}\prod_i\frac{\tau-\xi_l\xi_i}{1+\tau-\xi_l-\xi_i}\frac{2\xi_l-1-\tau}{(\xi_l-1)(q\xi_l-p)}. \label{ident2}
\end{equation}
Observe that
\begin{equation*}
g(z):=\prod_l\frac{p+q\xi_{l}z-\xi_{l}}{z - \xi_{l}}\prod_l\frac{\tau-z\xi_l}{1+\tau-z-\xi_l}\frac{2z-1-\tau}{(z-1)(qz-p)}
 \sim\frac{2q^{k-1}\prod_l\xi_l^2}{z}
\end{equation*}
    for $z \rightarrow \infty$.
    Thus, the integral of $g(z)$ over a circle with sufficiently large radius $\mathcal{R}$ is equal to the integral of $\frac{2q^{k-1}\prod_l\xi_l^2}{z}$ over the circle, so,
      $$\int_{\mathcal{C_R}} g(z) dz =2q^{k-1}\prod_l\xi_l^2 =\sum\hspace{0.05cm} \mathbf{Res}\hspace{0.05cm}g(z).$$
      Recall that $\tau=p/q$ and $p+q=1$. It is easy to see that $\mathbf{Res}_{z=1}\hspace{0.05cm}g(z)=p^k/q$ and $\mathbf{Res}_{z=\frac{p}{q}}\hspace{0.05cm}g(z)=p^k/q$. Using $p+q\xi_l^2-\xi_l=(q\xi_l-p)(\xi_l-1)$ we obtain
\begin{equation*}
 \mathbf{Res}_{z=\xi_l} g(z)=(\xi_l^2-\tau)\prod_{i\neq l}\frac{p+q\xi_{i}\xi_{l}-\xi_{i}}{\xi_{l} - \xi_{i}}\frac{\tau-\xi_l\xi_i}{1+\tau-\xi_l-\xi_i}.
\end{equation*}
Finally, observing that $1+\tau =1/q$, we can see that  $\displaystyle\mathbf{Res}_{z=1+\tau -\xi_l} g(z) = \mathbf{Res}_{z=\xi_l} g(z)$, and thus (\ref{ident1}) is verified. This completes the proof.\qed
\end{proof}
We use this new identity to symmetrize the integrand in (\ref{form}). First, we write the integrand as
 \begin{eqnarray*}
I(x,\xi)\varphi_+ ( k_+,\xi_+)\varphi_- ( k_-,\xi_-)& =& \prod_{i<j}f(\xi_i, \xi_j)\prod_i\frac{\xi_i^xe^{\varepsilon(\xi_i)t}}{1-\xi_i}\varphi_+ ( k_+,\xi_+)\varphi_- ( k_-,\xi_-) \\
&=&\prod_{i<0,j>0}f(\xi_i,\xi_j)\Big[\prod_{0<i<j}f(\xi_i, \xi_j)\prod_{i>0}\frac{\xi_i^xe^{\varepsilon(\xi_i)t}}{1-\xi_i}\varphi_+ ( k_+,\xi_+)\Big] \\
& &\times
\Big[\prod_{i<j<0}f(\xi_i, \xi_j)\prod_{i<0}\frac{\xi_i^xe^{\varepsilon(\xi_i)t}}{1-\xi_i}\varphi_- ( k_-,\xi_-)\Big] . \label{symm}
\end{eqnarray*}
The first bracket is a function of variables with positive indices that is given by
\begin{eqnarray*}
 & & \nonumber\prod_{0<i<j}f(\xi_i, \xi_j)\prod_{i>0}^{k_+}\frac{\xi_i^xe^{\varepsilon(\xi_i)t}}{1-\xi_i}\varphi_+ ( k_+,\xi_+) =\nonumber \prod_{0<i \neq j}\frac{\xi_j - \xi_i}{p+q\xi_i\xi_j -\xi_i}\prod_{i>0}^{k_+}\frac{\xi_i^{x+1}e^{\varepsilon(\xi_i)t}}{1-\xi_i}\\
 & & \hspace{0.3cm}\times \prod_{0<j<i}\frac{p+q\xi_i\xi_j-\xi_i}{\xi_j-\xi_i} \nonumber\frac{1}{(\xi_{1}^2\xi_{2}^2\cdots\xi_{k_+}^2-\tau^{k_+})
(\xi_{2}^2\xi_{3}^2\cdots\xi_{k_+}^2-\tau^{k_+-1})\cdots(\xi_{k_+}^2-\tau)}
\end{eqnarray*}
and its symmetrization by the identity in Lemma 3.2 is
\begin{equation}
  c_{k_+}\prod_{0<i \neq j}\frac{\xi_j - \xi_i}{p+q\xi_i\xi_j -\xi_i}\prod_{i>0}\Big(\frac{\xi_i^{x+1}e^{\varepsilon(\xi_i)t}}{1-\xi_i}\frac{1}{\xi_i^2-\tau}\Big)
 \prod_{0<i<j}\frac{1+\tau -(\xi_i+\xi_j)}{\tau -\xi_i\xi_j} \label{symm2}
\end{equation}
where
$$c_{k_+} = \frac{1}{k_+!}\frac{1}{(1+\tau)^{k_+(k_+-1)/2}}.$$
Now, (\ref{symm2}) is written as
\begin{eqnarray*}
  & &c_{k_+}\prod_{0<i \neq j}\frac{\xi_j - \xi_i}{p+q\xi_i\xi_j -\xi_i}\prod_{i>0}\frac{1}{(1-\xi_i)(\xi_i-\tau)}
  \prod_{i>0}
  \Big(\frac{\xi_i^{x+1}e^{\varepsilon(\xi_i)t}(\xi_i-\tau)}{\xi_i^2-\tau}\Big)
 \prod_{0<i<j}\frac{1+\tau -(\xi_i+\xi_j)}{\tau -\xi_i\xi_j} \\
 & &\hspace{0.3cm}=\tilde{c}_{k_+}\det\Big(\frac{1}{p+q\xi_i\xi_j -\xi_j}\Big)_{1\leq i,j \leq k_{+}}\prod_{i>0}\Big(\frac{\xi_i^{x+1}e^{\varepsilon(\xi_i)t}(\xi_i-\tau)}{\xi_i^2-\tau}\Big)
 \prod_{0<i<j}\frac{1+\tau -(\xi_i+\xi_j)}{\tau -\xi_i\xi_j} \\
  & &\hspace{0.3cm}=\tilde{c}_{k_+} \det\big(K_+(\xi_i,\xi_j) \big)_{1\leq i,j \leq k_+}\prod_{0<i<j}\frac{1+\tau -(\xi_i+\xi_j)}{\tau -\xi_i\xi_j} \label{identity}
\end{eqnarray*}
where
\begin{equation*}
 \tilde{c}_{k_+} = c_{k_+}\cdot(-1)^{-k_+}(pq)^{-k_+(k_+-1)/2}q^{k_+}
\end{equation*}
and
\begin{equation*}
 K_+(\xi,\xi') = \frac{\xi'^{x}e^{\varepsilon(\xi')t}}{p+q\xi\xi' -\xi}\cdot(\xi'-\tau)\cdot\frac{\xi'}{(\xi')^2-\tau}.
\end{equation*}
Likewise, for the second bracket, let $\tilde{\xi_i} = \xi_{-i}^{-1}$. Then
\begin{eqnarray*}
\prod_{i<j<0}f(\xi_i, \xi_j)&=& \prod_{0<j<i} \frac{\tilde{\xi_j}^{-1} -\tilde{\xi_i}^{-1}}{p + q\tilde{\xi_i}^{-1}\tilde{\xi_j}^{-1} - \tilde{\xi_i}^{-1}}
 = \prod_{0<j<i}\frac{\tilde{\xi_i} -\tilde{\xi_j}}{p\tilde{\xi_i}\tilde{\xi_j} + q - \tilde{\xi_j}}\\
  & =&\prod_{0<i<j}\frac{\tilde{\xi_j} -\tilde{\xi_i}}{q+p\tilde{\xi_i}\tilde{\xi_j} - \tilde{\xi_i}}
\end{eqnarray*}
and
\begin{eqnarray*}
  & &\varphi_-(k_-,\xi_-) = \frac{\tilde{\xi_1}^{-1}\tilde{\xi_2}^{-3}\cdots\tilde{\xi_{k_-}}^{-(2k_--1)}}{\big(\tau^{k_-} - (\tilde{\xi_{1}}\cdots\tilde{\xi_{k_-}})^{-2} \big)\big(\tau^{k_--1}-(\tilde{\xi_{2}}\cdots\tilde{\xi_{k_-}})^{-2}\big)\cdots
  \big(\tau - (\tilde{\xi_{k_-}})^{-2} \big)}\\
  &=& \frac{\tau^{\frac{k_-(k_-+1)}{2}-\frac{k_-(k_-+1)}{2}}\cdot\tilde{\xi_1}^{-2}\tilde{\xi_2}^{-4}\cdots\tilde{\xi_{k_-}}^{-2k_-}
  \cdot(\tilde{\xi_1}\cdots\tilde{\xi_{k_-}})}
  {\big(\tau^{k_-} - (\tilde{\xi_{1}}\cdots\tilde{\xi_{k_-}})^{-2} \big)\big(\tau^{k_--1}-(\tilde{\xi_{2}}\cdots\tilde{\xi_{k_-}})^{-2}\big)\cdots \big(\tau - (\tilde{\xi_{k_-}})^{-2} \big)}\\
  &=&\frac{\tau^{-\frac{k_-(k_-+1)}{2}}\cdot(\tilde{\xi_1}\cdots\tilde{\xi_{k_-}})}
  {\big((\tilde{\xi_{1}}\cdots\tilde{\xi_{k_-}})^{2}-\frac{1}{\tau^{k_-}}\big)
  \big((\tilde{\xi_{2}}\cdots\tilde{\xi_{k_-}})^{2}-\frac{1}{\tau^{k_--1}}\big)\cdots\big((\tilde{\xi_{k_-}})^{2} -\frac{1}{\tau}\big)}.
\end{eqnarray*}
Using Lemma 3.2 again, the symmetrization of
\begin{equation*}
 \prod_{0<i<j}\frac{\tilde{\xi_j} -\tilde{\xi_i}}{q+p\tilde{\xi_i}\tilde{\xi_j} - \tilde{\xi_i}}
 \frac{\tau^{-\frac{k_-(k_-+1)}{2}}\cdot(\tilde{\xi_1}\cdots\tilde{\xi_{k_-}})}
  {\big((\tilde{\xi_{1}}\cdots\tilde{\xi_{k_-}})^{2}-\frac{1}{\tau^{k_-}}\big)
  \big((\tilde{\xi_{2}}\cdots\tilde{\xi_{k_-}})^{2}-\frac{1}{\tau^{k_--1}}\big)\cdots\big((\tilde{\xi_{k_-}})^{2} -\frac{1}{\tau}\big)}
 \end{equation*}
 is
  \begin{equation*}
  c_{k_-}\prod_{0< i\neq j}\frac{\tilde{\xi_j} - \tilde{\xi_i}}{q + p\tilde{\xi_i}\tilde{\xi_j} - \tilde{\xi_i}}\prod_{i>0}^{k_-}\Big(\tilde{\xi_i}\cdot \frac{1}{\tilde{\xi_i}^2-\tau^{-1}}\Big)\prod_{0<i<j}\frac{1+\tau^{-1} - (\tilde{\xi_i} + \tilde{\xi_j})}{\tau^{-1} - \tilde{\xi_i}\tilde{\xi_j}}
\end{equation*}
where
$$c_{k_-}= \tau^{-\frac{k_-(k_-+1)}{2}}\frac{1}{k_-!}\frac{1}{(1+\tau^{-1})^{k_-(k_--1)/2}},
  $$
  and recalling that  $\tilde{\xi_i} = \xi_{-i}^{-1}$, we obtain the symmetrization of the second bracket
\begin{equation}
c_{k_-}\prod_{0>i \neq j}\frac{\xi_j - \xi_i}{p+q\xi_i\xi_j -\xi_i}\prod_{i<0}^{-k_-}\Big(\frac{\xi_i^{x-1}e^{\varepsilon(\xi_i)t}}{1-\xi_i}\frac{1}{\xi_i^{-2}-1/\tau}\Big)
 \prod_{j<i<0}\frac{1+1/\tau -(\xi_i^{-1}+\xi_j^{-1})}{1/\tau -(\xi_i\xi_j)^{-1}} .\label{negative}
\end{equation}
Using Lemma 3.1, (\ref{negative}) is expressed as
\begin{equation*}
  \tilde{c}_{k_-} \det\big(K_-(\xi_i,\xi_j) \big)_{-k_-\leq i,j \leq -1}\prod_{j<i<0}\frac{1+\tau^{-1} -(\xi_i^{-1}+\xi_j^{-1})}{\tau^{-1} -\xi_i^{-1}\xi_j^{-1}}
\end{equation*}
where
\begin{equation*}
 \tilde{c}_{k_-} = c_{k_-}\cdot(-1)^{-k_-}(pq)^{-k_-(k_--1)/2}q^{k_-}
\end{equation*}
and
\begin{equation*}
 K_-(\xi,\xi') = \frac{\xi'^{x}e^{\varepsilon(\xi')t}}{p+q\xi\xi' -\xi}\cdot(\xi'-\tau)\cdot\frac{(\xi')^{-1}}{(\xi')^{-2}-\tau^{-1}}.
\end{equation*}
Now, we summarize our result. Recall the definition of $f(\xi_i,\xi_j)$ in Section 2. Then we have
\begin{eqnarray}
 & &\mathbb{P}_{2\mathbb{Z}-1}(X_m(t)\leq x)\label{final}\\
  &=&\nonumber\sum_{k_\pm\geq 0}c_{m,k_{\pm}}\tilde{c}_{k_-}\tilde{c}_{k_+}\int_{\mathcal{C}_r^{k_-}}\int_{\mathcal{C_R}^{k_+}}\prod_{i<0,j>0}f(\xi_i,\xi_j)\cdot G_-(\xi_-)G_+(\xi_+)\prod_id\xi_i
\end{eqnarray}
where
\begin{eqnarray*}
 G_-(\xi_-) &=& \det\big(K_-(\xi_i,\xi_j) \big)_{-k_-\leq i,j \leq -1}\prod_{j<i<0}\frac{1+\tau^{-1} -(\xi_i^{-1}+\xi_j^{-1})}{\tau^{-1} -\xi_i^{-1}\xi_j^{-1}},\\
 G_+(\xi_+) &=&  \det\big(K_+(\xi_i,\xi_j) \big)_{1\leq i,j \leq k_+}\prod_{0<i<j}\frac{1+\tau -(\xi_i+\xi_j)}{\tau -\xi_i\xi_j}
 \end{eqnarray*}
 and
 \begin{eqnarray*}
  c_{m,k_{\pm}}\tilde{c}_{k_+} \tilde{c}_{k_-} &=& (-1)^{\frac{m+1}{2}-k_+}\cdot\frac{1}{k_+!k_-!}\cdot\tau^{\big(\frac{m^2-1}{8} + \frac{k_+(k_++1)}{2} - \frac{m+1}{2}k_+\big)}\cdot q^{\frac{k(k+1)}{2}} \times \\
  & & p^{-k_+(k_+-1)/2}\cdot q^{-k_-(k_--1)/2}\cdot
  \left[
                                                              \begin{array}{c}
                                                                k-1 \\
                                                                \frac{m-1}{2} + k_- \\
                                                              \end{array}
                                                            \right]_\tau.
 \end{eqnarray*}

 \subsection{The one-sided alternating initial condition}
Let us assume the initial condition $Y'= \{2n-k_0: n \in \mathbb{N}\}$ for a fixed $k_0 \in \mathbb{Z}.$ In this case we start with (3) in \cite{TW3},
\begin{equation}
 \mathbb{P}_{Y}(x_m(t) \leq x) = \sum_{k \geq 1}c_{m,k}\sum_{\substack{S \subset Y,\\|S|=k}}\tau^{\sigma(S,Y)}\int_{\mathcal{C_R}}\cdots \int_{\mathcal{C_R}}I(x,k,\xi)\prod_i\xi_i^{-s_i}\prod_i d\xi_i\label{basic}
\end{equation}
where
\begin{equation*}
 c_{m,k} = q^{k(k-1)/2}(-1)^{m}\tau^{m(m-1)/2-km}\left[
                                                            \begin{array}{c}
                                                              k-1 \\
                                                              m-1 \\
                                                            \end{array}
                                                          \right]_{\tau}
\end{equation*}
for convergence of geometric series.
 By using the same procedure as the subsection 1 and the identity in Lemma 3.2 we can obtain $\mathbb{P}_{Y'}(x_m(t) \leq x)$. We give the integral formula without the detailed derivation.
\begin{eqnarray}
  & & \mathbb{P}_{Y'}(x_m(t) \leq x) = (-1)^m\sum_{k \geq m}\frac{\tau^{(k-m)(k-m+1)/2}}{(1+\tau)^{k(k-1)}k!}
  \left[
                                                                                                             \begin{array}{c}
                                                                                                               k-1 \\
                                                                                                               k-m \\
                                                                                                             \end{array}
                                                                                                           \right]_{\tau} \label{oneside} \\
      & & \nonumber \hspace{0.1cm}\times\int_{\mathcal{C_{R}}}\cdots  \int_{\mathcal{C_{R}}}  \prod_{i\neq j}\frac{\xi_j-\xi_i}{p+q\xi_i\xi_j -\xi_i}\prod_{i}\frac{\xi_i^{x+k_0}e^{t\varepsilon (\xi_i)}}{(1-\xi_i)(\xi_i^2-\tau)}\prod_{i<j}\frac{1+\tau - (\xi_i +\xi_j)}{\tau -\xi_i\xi_j}\prod_i d\xi_i.
  \end{eqnarray}
This is to be compared with (\ref{step}) with step initial condition and (\ref{final}) with the two-sided alternating initial condition. (\ref{oneside}) has a product term  in the integrand compared with (\ref{step}) and is described by the integral only over large contours with the integrand $G_+(\xi)$ (when $k_0 =1$) compared with (\ref{final}).
\begin{remark}
 One may consider more generalized \textit{periodic} initial condition that sites in $k\mathbb{Z}\hspace{0.1cm} (k\in \mathbb{N}, k \geq 2)$ are occupied and all other sites are empty. The discrete TASEP with this initial condition was studied in \cite{Borodin}. We confirmed by using a computer that we do not have a combinatorial identity in a simple form when $k \geq 3.$ That is, currently, we have two special deterministic initial conditions, that is, the step initial condition and the alternating initial condition, which are associated with combinatorial identities.
\end{remark}

\noindent\textbf{Acknowledgement} \\
The author is grateful to Craig A. Tracy for suggesting this problem and invaluable comments on this work and thanks anonymous referees for useful comments. {This work was supported in part by National Science Foundation through the grant DMS-0906387.}

\end{document}